\theoremstyle{plain} 
\newtheorem{theorem}{Theorem}
\newtheorem{corollary}{Corollary}
\theoremstyle{definition}
\theoremstyle{remark} 
\newtheorem{example}{Example}
\newcommand{\prob}{\mathsf{P}}
\renewcommand{\S}{\mathcal{S}}
\newcommand{\del}{\partial}
\renewcommand{\phi}{\varphi} 
\newcommand{\eps}{\varepsilon}
\newcommand{\avar}{\sigma_\alpha^2}
\newcommand{\evar}{\sigma_\eps^2}
\newcommand{\diag}{\mathrm{diag}}
\newcommand{\pl}{\mathsf{pl}}
\newcommand{\unif}{\mathsf{Unif}}
\newcommand{\nm}{\mathsf{N}}
\newcommand{\chisq}{\mathsf{ChiSq}}
\title{Exact prior-free probabilistic inference on the heritability coefficient in a linear mixed model}
\author{
Qianshun Cheng \quad Xu Gao \quad Ryan Martin \\
Department of Mathematics, Statistics, and Computer Science \\
University of Illinois at Chicago \\
{\tt (qcheng5, xgao21, rgmartin)@uic.edu} 
}
\date{\today}
\begin{document}

\maketitle

\begin{abstract}
Linear mixed-effect models with two variance components are often used when variability comes from two sources.  In genetics applications, variation in observed traits can be attributed to biological and environmental effects, and the heritability coefficient is a fundamental quantity that measures the proportion of total variability due to the biological effect.  We propose a new inferential model approach which yields exact prior-free probabilistic inference on the heritability coefficient.  In particular we construct exact confidence intervals and demonstrate numerically our method's efficiency compared to that of existing methods.

\smallskip

\emph{Keywords and phrases:} Differential equation; inferential model; plausibility; unbalanced design; variance components.
\end{abstract}

\section{Introduction}
\label{S:intro}

Normal linear mixed effects models are useful in a variety of biological, physical, and social scientific applications with variability coming from multiple sources; see \citet{khuri.sahai.1985} and \citet{searle.casella.mcculloch.1992}.  In this paper, we focus on the case of two variance components, and the general model can be written as 
\begin{equation}
\label{eq:mixed.effects}
Y = X \beta + Z \alpha + \eps,
\end{equation}
where $Y$ is a $n$-vector of response variables, $X$ and $Z$ are design matrices for the fixed and random effects, of dimension $n \times p$ and $n \times a$, respectively, $\beta$ is a $p$-vector of unknown parameters, $\alpha$ is a normal random $a$-vector with mean 0 and covariance matrix $\avar A$, and $\eps$ is a normal random $n$-vector with mean 0 and covariance matrix $\evar I_n$.  Here, $\sigma^2=(\avar, \evar)$ is the pair of variance components.  Assume $A$ is known and $X$ is of rank $p < n$.  The unknown parameters in this model are the $p$ fixed-effect coefficients $\beta$ and the two variance components $\sigma^2$, so the parameter space is $(p+2)$-dimensional.   

In biological applications, the quantities $\alpha$ and $\eps$ in \eqref{eq:mixed.effects} denote the genetic and environmental effects, respectively.  Given that ``a central question in biology is whether observed variation in a particular trait is due to environmental or biological factors'' \citep{visscher.natrev.2008}, the heritability coefficient, $\rho=\avar / (\avar + \evar)$, which represents the proportion of phenotypic variance attributed to variation in genotypic values, is a fundamentally important quantity.  Indeed, mixed-effect models and inference on the heritability coefficient has been applied recently in genome-wide association studies \citep{yang.natgen.2010, golan.rosset.2011}; see Section~\ref{S:discuss} for more on these applications.  

Given the importance of the heritability coefficient, there are a number of methods available to construct confidence intervals for $\rho$ or, equivalently, for the variance ratio $\psi=\avar / \evar$.  When the design is balanced, \citet{graybill1976} and \citet{searle.casella.mcculloch.1992} give a confidence intervals for $\psi$ and other quantities.  When the design is possibly unbalanced, as we assume here, the problem is more challenging; in particular, exact ANOVA-based confidence intervals generally are not available.  \citet{wald1940} gave intervals for $\psi$ in the unbalanced case, and subsequent contributions include \citet{harville.fenech.1985}, \citet{fenech.harville.1991}, \citet{lee.seely.1996}, and \citet{burch.iyer.1997}.  Bayesian \citep{wolfinger.kass.2000, gelman.bda.2004, gelman2006} and fiducial methods \citep{fisher1935a, e.hannig.iyer.2008, cisewski.hannig.2012} are also available. 

The focus in this paper is exact prior-free probabilistic inference on the heritability coefficient based on the recently proposed inferential model (IM) framework.  \citet{imbasics, imcond, immarg} give a detailed account of the general framework, along with comparisons to other related approaches, including fiducial \citep{hannig2009, hannig2012, fisher1973}.  The IM approach is driven by the specification of an association between unobservable auxiliary variables, the observable data, and the unknown parameters.  This association is followed up by using properly calibrated random sets to predict these auxiliary variables.  \citet{liu.martin.wire} argue that it is precisely this prediction of the auxiliary variables that sets the IM approach apart from fiducial, Dempster--Shafer \citep{dempster2008, shafer1976}, and structural inference \citep{fraser1968}.  The IM output is a plausibility function that can be used for inference on the parameter of interest.  In particular, the plausibility function yields confidence intervals with exact coverage \citep[e.g.,][]{randset}.  A brief overview of the general IM framework is given in Section~\ref{SS:review}.  A key feature of the IM approach, and the motivation for this work, is that, besides being exact, the IM approach's careful handling of uncertainty often leads to more efficient inference.   

A key to the construction of an efficient IM is to reduce the dimension of the auxiliary variable as much as possible and, for the heritability coefficient, there are several dimension reduction steps.  A first dimension reduction eliminates the nuisance parameter $\beta$.  These well-known calculations are reviewed in Section~\ref{SS:fixed.effect}.  The main technical contribution here is in Section~\ref{SS:construction}, where we employ a differential equation-driven technique to reduce the auxiliary variable dimension, leading to a conditional IM for the heritability coefficient.  In particular, this step allows us to reduce the dimension beyond that allowed by sufficiency.  A predictive random set is introduced in Section~\ref{SS:prs}, and in Section~\ref{SS:pi} we show that the corresponding plausibility function is properly calibrated and, therefore, the plausibility function-based confidence intervals are provably exact.  Sections~\ref{SS:simulation} and \ref{SS:real} illustrate the proposed method in simulated and real data examples.  The general message is that our proposed confidence intervals for the heritability coefficient are exact and tend to be more efficient compared to existing methods.

\section{Preliminaries}
\label{S:prelim}

\subsection{Overview of inferential models}
\label{SS:review}

The goal of the IM framework is to get valid prior-free probabilistic inference.  This is facilitated by first associating the observable data and unknown parameters to a set of unobservable auxiliary variables.  For example, the marginal distribution of $Y$ from \eqref{eq:mixed.effects} is 
\begin{equation}
\label{eq:Y.marginal}
Y \sim \nm_n(X\beta, \evar I_n + \avar Z A Z^\top), 
\end{equation}
which can be written in association form:
\begin{equation}
\label{eq:first.assoc}
Y = X\beta + (\evar I_n + \avar Z A Z^\top)^{1/2} \, U, \quad U \sim \nm_n(0, I_n). 
\end{equation}
That is, observable data $Y$ and unknown parameters $(\beta, \sigma^2)$ are associated with unobservable auxiliary variables $U$, in this case, a $n$-vector of independent standard normal random variables.  Given this association, the basic IM approach is to introduce a predictive random set for $U$ and combine with the observed value of $Y$ to get a plausibility function.  Roughly, this plausibility function assigns, to assertions about the parameter of interest, values in $[0,1]$ measuring the plausibility that the assertion is true.  This plausibility function can be used to design exact frequentist tests or confidence regions. 

\citet{imbasics} give a general description of this three-step IM construction and its properties; in Section~\ref{SS:construction} below we will flesh out these three steps, including choice of a predictive random set, for the variance components problem at hand.  The construction of exact plausibility intervals for the heritability coefficient is presented in Section~\ref{SS:pi}, along with a theoretical justification of the claimed exactness.  


Notice, in the association \eqref{eq:first.assoc}, that the auxiliary variable $U$ is, in general, of higher dimension than that of the parameter $(\beta,\sigma^2)$.  There are two reasons for trying to reduce the auxiliary variable dimension.  First, it is much easier to specify good predictive random sets when the auxiliary variable is of low dimension.  Second, inference is more efficient when only a relatively small number of auxiliary variables require prediction.  This dimension reduction is via a conditioning operation, and \citet{imcond} give a general explanation of the gains as well as a novel technique for carrying out this reduction.  We employ these techniques in Section~\ref{SS:construction} to get a dimension-reduced association for the heritability coefficient in the linear mixed model.  

Another important challenge is when nuisance parameters are present.  Our interest is in the heritability coefficient, a function of the full parameter $(\beta, \sigma^2)$, so we need to marginalize over the nuisance parameters.  The next section marginalizes over $\beta$ using standard techniques; further marginalization will be carried out in Section~\ref{SS:construction}.

\subsection{Marginalizing out the fixed effect}
\label{SS:fixed.effect}

Start with the linear mixed model \eqref{eq:mixed.effects}.  Following the setup in \citet{e.hannig.iyer.2008}, let $K$ be a $n \times (n-p)$ matrix such that $KK^\top = I_n - X(X^\top X)^{-1}X^\top$ and $K^\top K = I_{n-p}$.  Next, let $B = (X^\top X)^{-1}X^\top$.  Then $y \mapsto (K^\top y, B y)$ is a one-to-one mapping.  Moreover, the distribution of $K^\top Y$ depends on $(\avar, \evar)$ only, and the distribution of $B Y$ depends on $(\beta,\sigma^2)$, with $\beta$ a location parameter.  In particular, from \eqref{eq:Y.marginal}, we get 
\[ K^\top Y \sim \nm_{n-p}(0, \evar I_{n-p} + \avar G) \quad \text{and} \quad BY \sim \nm_p(\beta, C_\sigma), \]
where $G = K^\top Z A Z^\top K$ and $C_\sigma$ is a $p \times p$ covariance matrix of a known form that depends on $\sigma^2=(\avar, \evar)$; its precise form is not important.  From this point, the general theory in \citet{immarg} allows us to marginalize over $\beta$ by simply deleting the $BY$ component.  Therefore, a marginal association for $(\avar, \evar)$ is
\[ K^\top Y = (\evar I_{n-p} + \avar G)^{1/2} \, U_2, \quad U_2 \sim \nm_{n-p}(0,I_{n-p}). \]
This marginalization reduces the auxiliary variable dimension from $n$ to $n-p$.  

In the marginal association for $(\avar, \evar)$ above, there are $n-p$ auxiliary variables but only two parameters.  Classical results on sufficient statistics in mixed effects model that will facilitate further dimension reduction.  For the matrix $G$ defined above, let $\lambda_1 > \cdots > \lambda_L \geq 0$ denote the (distinct) ordered eigenvalues with multiplicities $r_1,\ldots,r_L$, respectively.  Let $P=[P_1,\ldots,P_L]$ be a $(n-p) \times (n-p)$ orthogonal matrix such that $P^\top G P$ is diagonal with eigenvalues $\{\lambda_\ell: \ell=1,\ldots,L\}$, in their multiplicities, on the diagonal.  For $P_\ell$, a $(n-p) \times r_\ell$ matrix, define 
\[ S_\ell = Y^\top K P_\ell P_\ell^\top K^\top Y, \quad \ell=1,\ldots,L. \]
\citet{olsen.seely.birkes.1976} showed that $(S_1,\ldots,S_L)$ is a minimal sufficient statistic for $(\avar, \evar)$.  Moreover, the distribution of $(S_1,\ldots,S_L)$ is determined by 
\begin{equation}
\label{eq:baseline}
S_\ell = (\lambda_\ell \avar + \evar) V_\ell, \quad V_\ell \sim \chisq(r_\ell), \quad \text{independent}, \quad \ell=1,\ldots,L. 
\end{equation}
This reduces the auxiliary variable dimension from $n-p$ to $L$.  We take \eqref{eq:baseline} as our ``baseline association.''  Even in this reduced baseline association, there are $L$ auxiliary variables but only two parameters, which means there is room for even further dimension reduction.  The next section shows how to reduce to a scalar auxiliary variable when the parameter of interest is the scalar heritability coefficient.

\section{Inferential model for heritability}
\label{S:marginal}

\subsection{Association}
\label{SS:construction}

For the moment, it will be convenient to work with the variance ratio, $\psi = \avar / \evar$.  Since $\psi=\rho / (1 - \rho)$ is a one-to-one function of $\rho$, the two parametrizations are equivalent.  Rewrite the baseline association \eqref{eq:baseline} as 
\begin{equation}
\label{eq:e.psi.assoc}
S_\ell = \evar(\lambda_\ell \psi + 1) V_\ell, \quad V_\ell \sim \chisq(r_\ell), \quad \text{independent}, \quad \ell=1,\ldots,L.
\end{equation}
If we make the following transformations, 
\begin{align*}
X_\ell & = (S_\ell/r_\ell) \bigr/ (S_L / r_L), \quad \ell=1,\ldots,L-1, \qquad X_L = S_L, \\
U_\ell & = (V_\ell / r_\ell) \bigr/ (V_L / r_L), \quad \ell=1,\ldots,L-1, \qquad U_L = V_L.
\end{align*}
then the association \eqref{eq:e.psi.assoc} becomes
\[ X_\ell = \frac{\lambda_\ell \psi + 1}{\lambda_L \psi + 1} \, U_\ell, \quad \ell=1,\ldots,L-1, \qquad X_L = \evar(\lambda_L \psi + 1) U_L. \]
Since for every $(X, U, \psi)$, there exists a $\evar$ that solves the right-most equation, it follows from the general theory in \citet{immarg} that a marginal association for $\psi$ is obtained by deleting the component above involving $\evar$.  In particular, a marginal association for $\psi$ is  
\[ X_\ell = \frac{\lambda_\ell \psi + 1}{\lambda_L \psi + 1} U_\ell, \quad \ell=1,\ldots,L-1. \]
If we write 
\[ f_\ell(\rho) = \frac{1 + \rho(\lambda_\ell - 1)}{1 + \rho(\lambda_L-1)}, \quad \ell=1,\ldots,L-1, \]
then we get a marginal association for $\rho$ of the form 
\begin{equation}
\label{eq:rho.assoc.1}
X_\ell = f_\ell(\rho) U_\ell, \quad \ell=1,\ldots,L-1. 
\end{equation}
Marginalization reduces the auxiliary variable dimension by 1.  Further dimension reduction will be considered next.  Note that the new auxiliary variable $U$ is a multivariate F-distributed random vector \citep[e.g.,][]{amos.bulgren.1972}.   

Here we construct a local conditional IM for $\rho$ as described in \citet{imcond}.  Select a fixed value $\rho_0$; more on this choice in Section~\ref{SS:pi}.  To reduce the dimension of the auxiliary variable $U$, in \eqref{eq:rho.assoc.1}, from $L-1$ to $1$, we construct two pairs of functions---one pair, $(T,H_0)$, on the sample space, and one pair, $(\tau, \eta_0)$, on the auxiliary variable space.  We insist that $x \mapsto (T(x),H_0(x))$ and $u \mapsto (\tau(u), \eta_0(u))$ are both one-to-one, and $H_0=H_{\rho_0}$ and $\eta_0=\eta_{\rho_0}$ are allowed to depend on the selected $\rho_0$.  The goal is to find $\eta_0$ which is, in a certain sense, not sensitive to changes in the auxiliary variable.  

Write the association \eqref{eq:rho.assoc.1} so that $u$ is a function of $x$ and $\rho$, i.e., 
\[ u_\ell(x,\rho) = x_\ell / f_\ell(\rho), \quad \ell=1,\ldots,L-1. \]
We want to choose the function $\eta_0$ such that the partial derivative of $\eta_0(u(x,\rho))$ with respect to $\rho$ vanishes at $\rho=\rho_0$.  By the chain rule, we have 
\[ \frac{\del \eta_0(u(x,\rho))}{\del\rho} = \frac{\del \eta_0(u)}{\del u} \Bigr|_{u=u(x,\rho)} \frac{\del u(x,\rho)}{\del \rho}, \]
so our goal is to find $\eta_0$ to solve the following partial differential equation:
\[ \frac{\del \eta_0(u)}{\del u} \Bigr|_{u=u(x,\rho)} \frac{\del u(x,\rho)}{\del \rho} = 0, \quad \text{at $\rho=\rho_0$}; \]
here $\del u / \del\rho$ is a $(L-1) \times 1$ vector and $\del\eta_0/\del u$ is a $(L-2) \times (L-1)$ matrix of rank $L-2$.  Toward solving the partial differential equation, first we get that the partial derivative of $u_\ell(x,\rho)$ with respect to $\rho$ satisfies
\[ \frac{\del u_\ell(x,\rho)}{\del\rho} = -\frac{f_\ell'(\rho)}{f_\ell(\rho)^2} x_\ell = -g(\rho) \, u_\ell(x,\rho), \]
where $g(\rho) = \{\frac{\del}{\del\rho}\log f_1(\rho),\ldots,\frac{\del}{\del\rho} \log f_{L-1}(\rho)\}^\top$.  This simplifies the relevant partial differential equation to the following:
\begin{equation}
\label{eq:pde}
\frac{\del \eta_0(u)}{\del u} \Bigr|_{u=u(x,\rho)} \,\diag\{ u(x,\rho) \} \, g(\rho) = 0, \quad \text{at $\rho=\rho_0$}, 
\end{equation}
where $\diag(a)$ is a diagonal matrix constructed from a vector $a$.  The method of characteristics \citep[e.g.,][]{polyanin2002} for solving partial differential equations identifies a logarithmic function of the form
\begin{equation}
\label{eq:eta0}
\eta_0(u)^\top = (\log u_1,\ldots, \log u_{L-1}) M_0^\top, 
\end{equation}
where $M_0=M_{\rho_0}$ is a $(L-2) \times (L-1)$ matrix with rows orthogonal to $g(\rho)$ at $\rho=\rho_0$.  For example, since the matrix that projects to the orthogonal complement to the column space of $g(\rho_0)$ has rank $L-2$, we can take $M_0$ to be a matrix whose $L-2$ rows form a basis for that space.  For $M_0$ defined in this way, it is easy to check that $\eta_0$ in \eqref{eq:eta0} is indeed a solution to the partial differential equation \eqref{eq:pde}.  

We have completed specification of the $\eta_0$ function; it remains to specify $\tau$ and $(T,H_0)$.  The easiest to specify next is $H_0(x)$, the value of $\eta(u(x,\rho_0))$, as a function of $x$:
\[ H_0(x)^\top = \Bigl( \log\frac{x_1}{f_1(\rho_0)}, \ldots, \log\frac{x_{L-1}}{f_{L-1}(\rho_0)} \Bigr) M_0^\top. \]
As we describe below, the goal is to condition on the observed value of $H_0(X)$.  

Next, we define $T$ and $\tau$ to supplement $H_0$ and $\eta_0$, respectively.  In particular, take a $(L-1)$-vector $w(\rho)$ which is not orthogonal to $g(\rho)$ at $\rho=\rho_0$.  It is easy to check that the entries in $g(\rho)$ are strictly positive for all $\rho$.  Therefore, we can take $w(\rho)$ independent of $\rho$, e.g., $w(\rho) \equiv 1_{L-1}$, is not orthogonal to $g(\rho)$.  Now set
\begin{equation}
\label{eq:log.linear}
\begin{pmatrix} \tau(u) \\ \eta_0(u) \end{pmatrix} = \begin{pmatrix} 1_{L-1}^\top \\ M_0 \end{pmatrix} \begin{pmatrix} \log u_1 \\ \vdots \\ \log u_{L-1} \end{pmatrix}. 
\end{equation}
This is a log-linear transformation, and the linear part is non-singular, so this is a one-to-one mapping.  Finally, we take $T$ as
\[ T(x) = \sum_{\ell=1}^{L-1} \log x_\ell.  \]
Since $(T(x),H_0(x))$ is log-linear, just like $(\tau(u), \eta_0(u))$, it is also one-to-one.  

We can now write the conditional association for $\rho$.  For the given $\rho_0$, the mapping $x \mapsto (T(x),H_0(x))$ describes a split of our previous association \eqref{eq:rho.assoc.1} into two pieces: 
\[ \sum_{\ell=1}^{L-1} \log X_\ell = \sum_{\ell=1}^{L-1} \log f_\ell(\rho) + \sum_{\ell=1}^{L-1} \log U_\ell \quad \text{and} \quad H_0(X)=\eta_0(U). \]
The first piece carries direct information about $\rho$.  The second piece plays a conditioning role, correcting for the fact that some information was lost in reducing the $(L-1)$-dimensional $X$ to a one-dimensional $T(X)$.  To complete the specification of the conditional association, write $\phi(\rho) = \sum_{\ell=1}^{L-1} \log f_\ell(\rho)$ and $V = \sum_{\ell=1}^{L-1} \log U_\ell$.  Then we have 
\begin{equation}
\label{eq:rho.assoc.2}
T(X) = \phi(\rho) + V, \quad V \sim \prob_{V|h_0, \rho_0}, 
\end{equation}
where $\prob_{V|h_0, \rho_0}$ is the conditional distribution of $\tau(U)$, given that $\eta_0(U)$ equals to the observed value $h_0$ of $H_0(X)$.   To summarize, \eqref{eq:rho.assoc.2} completes the association step that describes the connection between observable data, unknown parameter of interest, and unobservable auxiliary variables.  Of particular interest is that this association involves only a one-dimensional auxiliary variable compared to the association \eqref{eq:e.psi.assoc} obtained from the minimal sufficient statistics that involves an $L$-dimensional auxiliary variable.  This dimension reduction will come in handy for the choice of predictive random set in the following section.  The price we paid for this dimension reduction was the choice of a particular localization point $\rho_0$.  In Section~\ref{SS:pi} we employ a trick to side-step this issue when the goal is, as in this paper, to construct confidence intervals.

\subsection{Predictive random sets}
\label{SS:prs}

Having reduced the auxiliary variable to a scalar in \eqref{eq:rho.assoc.2}, the choice of an efficient predictive random set is now relatively simple.  Though there is an available theory of optimal predictive random sets \citep{imbasics}, here we opt for simplicity; in particular, we propose a default predictive random set that is theoretically sound and computational and intuitively simple.  

Consider the following version of what \citet{imbasics} call the ``default'' predictive random set:
\begin{equation}
\label{eq:default.prs}
\S = \{v: |v-\mu_0| \leq |V-\mu_0|\}, \quad V \sim \prob_{V|h_0, \rho_0}. 
\end{equation}
This $\S$, with distribution $\prob_{\S|h_0,\rho_0}$, is a random interval, centered at the mean $\mu_0$ of the conditional distribution $\prob_{V|h_0, \rho_0}$.  One key feature of this predictive random set is that it is nested, i.e., for any two distinct realizations of $\S$, one is a subset of the other.  The second key feature is a calibration of the predictive random set distribution with the underlying distribution of the auxiliary variable.  Following \citet{randset}, define the contour function 
\[ \gamma_\S(v) = \prob_{\S|h_0,\rho_0}(\S \ni v), \]
which represents the probability that the predictive random set contains the value $v$ of the auxiliary variable.  We shall require that 
\begin{equation}
\label{eq:prs.valid}
\prob_{V|h_0,\rho_0}\{\gamma_\S(V) \geq 1-\alpha\} \leq \alpha, \quad \forall \; \alpha \in (0,1). 
\end{equation}
For the default predictive random set in \eqref{eq:default.prs}, it is easy to check that 
\[ \gamma_\S(v) = \prob_{V|h_0,\rho_0}\{|V-\mu_0| \geq |v-\mu_0|\} = 1 - F_{h_0,\rho_0}(|v-\mu_0|), \]
where $F_{h_0,\rho_0}$ is the distribution function of $|V-\mu_0|$ for $V \sim \prob_{V|h_0,\rho_0}$.  From the construction above, it is clear that it is a continuous distribution.  Then, $|V-\mu_0|$ is a continuous random variable, so $\gamma_\S(V)$ is uniformly distributed on $(0,1)$.  Therefore, \eqref{eq:prs.valid} holds for the default predictive random set $\S$.  Results on optimal predictive random sets are available \citep{imbasics}, but here, again, our focus is on simplicity.  See Section~\ref{S:discuss}.

\subsection{Plausibility intervals}
\label{SS:pi}

Here we combine the association in Section~\ref{SS:construction} with the predictive random set described above to produce a plausibility function for inference about $\rho$.  In general, a plausibility function is a data-dependent mapping that assigns, to each assertion about the parameter, a value in $[0,1]$, with the interpretation that small values suggest the assertion is not likely to be true, given the data; see \citet{imbasics}.  For simplicity, we focus only on the collection of singleton assertions, i.e., $\{\rho=r\}$ for $r \in [0,1]$.  These are also the relevant assertions for constructing interval estimates based on the IM output.  

Let $X=x$ be the observations in \eqref{eq:rho.assoc.1}.  The association step in Section~\ref{SS:construction} yields a data-dependent collection of sets indexed by the auxiliary variable.  In particular, write $R_x(v) = \{\rho: T(x)=\phi(\rho) + v\}$, a set-valued function of $v$.  These sets are combined with the predictive random set in Section~\ref{SS:prs} to get an enlarged $x$-dependent random set:
\begin{equation}
\label{eq:union}
R_x(\S) = \bigcup_{v \in \S} R_x(v). 
\end{equation}
Now, for a given assertion $\{\rho=r\}$, we compute the plausibility function, 
\[ \pl_{x|h_0,\rho_0}(r) = \prob_{\S|h_0,\rho_0}\{R_x(\S) \ni r\}, \]
the probability that the random set $R_x(\S)$ contains the asserted value $r$ of $\rho$.  A simple calculation shows that, in this case with singleton assertions, we have 
\[ \pl_{x|h_0,\rho_0}(r) = \gamma_\S\bigl(T(x) - \phi(r)\bigr) = 1 - F_{h_0,\rho_0}(|T(x)-\phi(r)-\mu_0|), \]
where $F_{h_0,\rho_0}$ is defined in Section~\ref{SS:prs}.  The above display shows that the plausibility function can be expressed directly in terms of the distribution of the predictive random set, without needing to go through the construction of $R_x(\S)$ as in \eqref{eq:union}.    

We pause here to answer a question that was left open from Section~\ref{SS:construction}, namely, how to choose the localization point $\rho_0$.  Following \citet{imcond}, we propose here to choose $\rho_0$ to match the value of $\rho$ specified by the singleton assertion.  That is, we propose to let the localization point depend on the assertion.  All the elements in the plausibility function above with a 0 subscript, denoting dependence on $\rho_0$, are changed in an obvious way to get a new plausibility function
\begin{equation}
\label{eq:pl.fun}
\pl_{x|h_\rho,\rho}(\rho) = 1 - F_{h_\rho,\rho}(|T(x)-\phi(\rho)-\mu_\rho|). 
\end{equation}
We treat this as a function of $\rho$ to be used for inference.  In particular, we can construct a $100(1-\alpha)$\% plausibility interval for $\rho$ as follows:
\begin{equation}
\label{eq:pl.int}
\Pi_\alpha(x) = \{\rho: \pl_{x|h_\rho, \rho}(\rho) > \alpha\}.
\end{equation}
The plausibility function, and the corresponding plausibility region, are easy to compute, as we describe in Section~\ref{SS:compute}.  Moreover, the calibration \eqref{eq:prs.valid} of the predictive random set leads to exact plausibility function-based confidence intervals, as we now show.   

We need some notation for the sampling distribution of $X$, given all the relevant parameters.  Recall that the distribution of $X$ actually depends on $(\avar, \evar)$ or, equivalently, $(\rho, \evar)$.  The error variance $\evar$ is a nuisance parameter, but $\evar$ still appears in the sampling model for $X$.  We write this sampling distribution as $\prob_{X|\rho, \evar}$.  

\begin{theorem}
\label{thm:valid}
Take the association \eqref{eq:rho.assoc.2} and the default predictive random set $\S$ in \eqref{eq:default.prs}.  Then for any $\rho$, any value $h_\rho$ of $H_\rho$, and any $\evar$, the plausibility function satisfies 
\begin{equation}
\label{eq:valid.thm}
\prob_{X|\rho, \evar}\bigl\{ \pl_{X|h_\rho, \rho}(\rho) \leq \alpha \mid H_\rho(X)=h_\rho \bigr\} = \alpha, \quad \forall \; \alpha \in (0,1). 
\end{equation}
\end{theorem}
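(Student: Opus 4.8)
The plan is to show that, under the stated conditioning and evaluated at the true value $\rho$, the plausibility function $\pl_{X|h_\rho,\rho}(\rho)$ is exactly $\unif(0,1)$-distributed; the identity \eqref{eq:valid.thm} is then immediate, since $\prob\{1-U \le \alpha\}=\alpha$ for $U \sim \unif(0,1)$.

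First I would record why the nuisance parameter $\evar$ plays no role, which is what licenses the ``for any $\evar$'' clause. Both $T(X)=\sum_{\ell=1}^{L-1}\log X_\ell$ and $H_\rho(X)$ depend on the data only through $X_1,\ldots,X_{L-1}$, and by the association \eqref{eq:rho.assoc.1} these satisfy $X_\ell=f_\ell(\rho)U_\ell$, an expression free of $\evar$ (the only $\evar$-dependent coordinate, $X_L$, never enters $T$ or $H_\rho$). Hence the conditional law of $T(X)$ given $H_\rho(X)=h_\rho$ does not involve $\evar$.

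Next I would push the sample-space conditioning through the association. Because $x \mapsto (T(x),H_\rho(x))$ and $u \mapsto (\tau(u),\eta_\rho(u))$ are both one-to-one, the event $\{H_\rho(X)=h_\rho\}$ corresponds precisely to $\{\eta_\rho(U)=h_\rho\}$, and on this event the association \eqref{eq:rho.assoc.2} gives $T(X)-\phi(\rho)=V$ with $V$ distributed according to $\prob_{V|h_\rho,\rho}$, the conditional law of $\tau(U)$ given $\eta_\rho(U)=h_\rho$. Substituting into \eqref{eq:pl.fun} yields
\[ \pl_{X|h_\rho,\rho}(\rho) = 1 - F_{h_\rho,\rho}\bigl(|T(X)-\phi(\rho)-\mu_\rho|\bigr) = 1 - F_{h_\rho,\rho}\bigl(|V-\mu_\rho|\bigr). \]
Now $F_{h_\rho,\rho}$ is by definition the distribution function of $|V-\mu_\rho|$ for $V \sim \prob_{V|h_\rho,\rho}$, and this law is continuous. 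The probability integral transform then gives $F_{h_\rho,\rho}(|V-\mu_\rho|) \sim \unif(0,1)$, so $\pl_{X|h_\rho,\rho}(\rho) \sim \unif(0,1)$, which completes the argument.

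The main obstacle is the middle step: carefully verifying that the conditioning event on the sample space transfers to the matching event on the auxiliary space, and that the resulting conditional distribution of $T(X)-\phi(\rho)$ is exactly the $\prob_{V|h_\rho,\rho}$ used to construct $F_{h_\rho,\rho}$. This hinges on the joint one-to-oneness of the two log-linear maps built in Section~\ref{SS:construction}; once that change of variables is in hand, the calibration \eqref{eq:prs.valid} of the default predictive random set --- equivalently, the continuity needed for the probability integral transform --- finishes everything routinely.
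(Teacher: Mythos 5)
Your proposal is correct and follows essentially the same route as the paper's (much terser) proof: both reduce the claim to showing that, conditionally on $H_\rho(X)=h_\rho$ at the true $\rho$, the plausibility function is $\unif(0,1)$ via the probability integral transform applied to $F_{h_\rho,\rho}(|V-\mu_\rho|)$. Your added details --- that $\evar$ drops out because $T$ and $H_\rho$ involve only $X_1,\ldots,X_{L-1}$, and that the one-to-one log-linear maps let the conditioning event transfer from $\{H_\rho(X)=h_\rho\}$ to $\{\eta_\rho(U)=h_\rho\}$ --- are exactly what the paper compresses into the phrase ``it follows from the conditional distribution construction.''
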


\begin{proof}
For given $(\evar, \rho)$, if $h_\rho$ is the value of $H_\rho(X)$, then it follows from the conditional distribution construction that the plausibility function in \eqref{eq:pl.fun}, as a function of $T(X)$ with $X \sim \prob_{X|\rho, \evar}$, is $\unif(0,1)$.  Then the equality in \eqref{eq:valid.thm} follows immediately.  
\end{proof}

Averaging the left-hand side of \eqref{eq:valid.thm} over $h_\rho$, with respect to the distribution of $H_\rho(X)$, and using iterated expectation gives the following unconditional version of Theorem~\ref{thm:valid}.

\begin{corollary}
\label{cor:valid}
Under the conditions of Theorem~\ref{thm:valid}, for any $(\rho, \evar)$, 
\[ \prob_{X|\rho, \evar}\bigl\{ \pl_{X|H_\rho(X), \rho}(\rho) \leq \alpha \bigr\} = \alpha, \quad \forall \; \alpha \in (0,1). \]
\end{corollary}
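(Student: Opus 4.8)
The plan is to derive the Corollary from Theorem~\ref{thm:valid} by marginalizing over the conditioning statistic $H_\rho(X)$ using the tower property of conditional expectation. Fix an arbitrary pair $(\rho, \evar)$ and an arbitrary level $\alpha \in (0,1)$. Theorem~\ref{thm:valid} already establishes that, for \emph{every} realized value $h_\rho$ of $H_\rho(X)$, the conditional probability
\[
\prob_{X|\rho,\evar}\bigl\{ \pl_{X|h_\rho,\rho}(\rho) \leq \alpha \mid H_\rho(X) = h_\rho \bigr\} = \alpha,
\]
which is the crucial ingredient: the conditional coverage is exactly $\alpha$ regardless of the conditioning value.

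The key step is then to observe that the unconditional event $\{\pl_{X|H_\rho(X),\rho}(\rho) \leq \alpha\}$ is precisely the event appearing in the Theorem once we plug in the realized conditioning value $h_\rho = H_\rho(X)$. I would write the unconditional probability as an iterated expectation, conditioning first on $H_\rho(X)$:
\[
\prob_{X|\rho,\evar}\bigl\{ \pl_{X|H_\rho(X),\rho}(\rho) \leq \alpha \bigr\} = \E_{H_\rho(X)}\Bigl[ \prob_{X|\rho,\evar}\bigl\{ \pl_{X|h_\rho,\rho}(\rho) \leq \alpha \mid H_\rho(X) = h_\rho \bigr\} \Bigr].
\]
By Theorem~\ref{thm:valid}, the inner conditional probability equals the constant $\alpha$ for every value of $h_\rho$, so the outer expectation is simply $\E_{H_\rho(X)}[\alpha] = \alpha$. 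This yields the claimed identity directly, and since $(\rho,\evar)$ and $\alpha$ were arbitrary, the Corollary follows.

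I expect the proof to be essentially a one-line application of iterated expectation, so there is no serious obstacle in the mathematics itself. The only point requiring a small amount of care is the measurability and well-definedness of the inner conditional probability as a function of $h_\rho$, so that the outer expectation is legitimate; but since $\prob_{V|h_0,\rho_0}$ was constructed as a genuine (regular) conditional distribution and the plausibility function in \eqref{eq:pl.fun} depends measurably on $h_\rho$ through $F_{h_\rho,\rho}$ and $\mu_\rho$, this is routine. The substantive content is entirely carried by the conditional result in Theorem~\ref{thm:valid}; the Corollary merely records that exactness of the conditional coverage at \emph{every} fixed $h_\rho$ forces exactness of the unconditional coverage after averaging.
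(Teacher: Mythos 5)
Your proposal is correct and matches the paper's own argument: the paper derives the corollary by averaging the left-hand side of \eqref{eq:valid.thm} over $h_\rho$ with respect to the distribution of $H_\rho(X)$ and invoking iterated expectation, exactly as you do. The inner conditional probability is identically $\alpha$ by Theorem~\ref{thm:valid}, so the outer expectation is $\alpha$; no further work is needed.
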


Since we have proper calibration of the plausibility function, both conditionally and unconditionally, coverage probability results for the plausibility interval \eqref{eq:pl.int} are also available.  This justifies our choice to call $\Pi_\alpha(x)$ a $100(1-\alpha)$\% plausibility interval, i.e., the frequentist coverage probability of $\Pi_\alpha$ is exactly $1-\alpha$.  

\begin{corollary}
\label{cor:coverage}
The coverage probability of $\Pi_\alpha(X)$ in \eqref{eq:pl.int} is exactly $1-\alpha$.
\end{corollary}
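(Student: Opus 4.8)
The plan is to read off the coverage statement as the exact complement of the event already controlled in Corollary~\ref{cor:valid}, so that no new probabilistic computation is required. First I would rewrite the coverage event in terms of the plausibility function. By the definition of the plausibility interval in \eqref{eq:pl.int}, the true parameter value $\rho$ belongs to $\Pi_\alpha(X)$ exactly when the singleton assertion at the true value is not excluded, i.e., when $\pl_{X|h_\rho,\rho}(\rho) > \alpha$ with $h_\rho = H_\rho(X)$ the observed value of the conditioning statistic. The key structural point to emphasize is that, because the localization point in \eqref{eq:pl.fun} is chosen to match the asserted value, evaluating interval membership at the true $\rho$ automatically uses the localization $\rho_0 = \rho$; this is precisely the configuration under which Corollary~\ref{cor:valid} is stated.

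With the coverage event identified as $\{\pl_{X|H_\rho(X),\rho}(\rho) > \alpha\}$, I would then appeal directly to Corollary~\ref{cor:valid}, which gives $\prob_{X|\rho,\evar}\{\pl_{X|H_\rho(X),\rho}(\rho) \leq \alpha\} = \alpha$ for every $(\rho,\evar)$. Taking complements yields
\[ \prob_{X|\rho,\evar}\{\rho \in \Pi_\alpha(X)\} = \prob_{X|\rho,\evar}\{\pl_{X|H_\rho(X),\rho}(\rho) > \alpha\} = 1-\alpha, \]
which is exactly the claimed coverage. Because this identity holds for every value of the nuisance parameter $\evar$, coverage is exact uniformly over $\evar$, confirming that marginalizing out $\beta$ and then $\evar$ has not degraded the frequentist guarantee.

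I do not anticipate any genuine obstacle, since the statement is merely a restatement of Corollary~\ref{cor:valid} with the inequality reversed. The one point deserving care is the bookkeeping around the assertion-dependent localization: one must verify that testing membership of the \emph{true} $\rho$ invokes the same $(h_\rho,\rho)$ pairing for which the conditional uniformity of Theorem~\ref{thm:valid}, and hence Corollary~\ref{cor:valid}, was established, rather than a mismatched localization point. Once that alignment is checked, the continuity of $F_{h_\rho,\rho}$ noted in Section~\ref{SS:prs} guarantees there is no probability mass exactly at the threshold, so the distinction between $>$ and $\geq$ is immaterial and the coverage is exactly $1-\alpha$.
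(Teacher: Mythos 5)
Your proposal is correct and matches the paper's (implicit) argument: the paper derives Corollary~\ref{cor:coverage} directly from Corollary~\ref{cor:valid} by observing that the coverage event $\{\rho \in \Pi_\alpha(X)\}$ is precisely the complement of $\{\pl_{X|H_\rho(X),\rho}(\rho) \leq \alpha\}$, uniformly in the nuisance parameter $\evar$. Your additional remarks on the assertion-dependent localization and the continuity of $F_{h_\rho,\rho}$ are sound bookkeeping but do not change the route.
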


\section{Numerical results}
\label{S:numerical}

\subsection{Implementation}
\label{SS:compute}

Evaluation of the plausibility function in \eqref{eq:pl.fun} requires the distribution function $F_{h_\rho,\rho}$ of $|V-\mu_\rho|$ corresponding to the conditional distribution $\prob_{V|h_\rho,\rho}$ of $V=\tau(U)$, given $\eta_\rho(U) = H_\rho(X)$.  This conditional distribution is not of a convenient form, so numerical methods are needed.  For $\rho$ fixed, since the transformation \eqref{eq:log.linear} from $U$ to $(\tau(U), \eta_\rho(U))$ is of a log-linear form, and the density function of $U$ can be written in closed-form, we can evaluate the joint density for $(\tau(U),\eta_\rho(U))$ and, hence, the conditional density of $V=\tau(U)$.  Numerical integration is used to evaluate the normalizing constant, the mean $\mu_\rho$, and the distribution function $F_{h_\rho,\rho}$.  R code is available at \url{www.math.uic.edu/~rgmartin}.

\subsection{Simulation results}
\label{SS:simulation}

In this section, we consider a standard one-way random effects model, i.e., 
\[ y_{ij} = \mu + \alpha_i + \eps_{ij}, \quad i=1,\ldots,a, \quad j=1,\ldots,n_i, \]
where $\alpha_1,\ldots,\alpha_a$ are independent with common distribution $\nm(0,\avar)$, and the $\eps_{ij}$s are independent with common distribution $\nm(0,\evar)$; the $\alpha_i$s and $\eps_{ij}$s are also mutually independent.  Our goal is to compare the proposed IM-based plausibility intervals for $\rho$ with the confidence intervals based on several competing methods.  Of course, the properties of the various intervals depend on the design, in this case, the within-group sample sizes $n_1,\ldots,n_a$, and the values of $(\avar, \evar)$.  Our focus here is on cases with small sample sizes, namely, where the total sample size $n=n_1+\cdots+n_a$ is fixed at 15.  The three design patterns $(n_1,\ldots,n_a)$ considered are: $(1,1,1,1,1,10)$, $(2,4,4,5)$, and $(2,3,10)$.  The nine $(\avar, \evar)$ pairs considered are: $(0.05, 10)$, $(0.1, 10)$, $(0.5, 10)$, $(1,10)$, $(0.5, 2)$, $(1, 1)$, $(2, 0.5)$, $(5, 0.2)$, and $(10, 0.1)$. Without loss of generality, we set $\mu=0$.  

For each design pattern and pair of $(\avar, \evar)$, 1000 independent data sets were generated and 95\% two-sided interval estimates for $\rho$ were computed based on the exact method of \citet{burch.iyer.1997}, the fiducial method of \citet{e.hannig.iyer.2008}, and the proposed IM method.  Empirical coverage probabilities and average length of the confidence interval under each setting were compared to investigate the performance of each method.  Besides these three methods, we also implemented Bayesian and profile likelihood approaches.  The three aforementioned methods all gave intervals with better coverage and length properties than the Bayesian method, and the profile likelihood method was very unstable with small sample sizes, often having very high coverage with very wide intervals or very low coverage with very narrow intervals.  So, these results are not reported.  

A summary of the simulation results is displayed in Figure~\ref{fig:sim1}.  Panel~(a) displays the coverage probabilities, and Panel~(b) displays the relative length difference, which is defined as the length of the particular interval minus the length of the IM interval, scaled by the length of the IM interval.  As we expect from Corollary~\ref{cor:coverage}, the IM plausibility intervals have coverage at the nominal 95\% level.  We also see that the IM intervals tend to be shorter than the fiducial and Burch--Iyer confidence intervals.  The fiducial intervals have coverage probability exceeding the nominal 95\% level, but this comes at the expense of longer intervals on average.  Overall, the proposed IM-based method performs quite well compared to these existing methods.  We also replicated the simulation study in \citet{e.hannig.iyer.2008}, which involves larger sample sizes and a broader range of imbalance, and the relative comparisons between these three methods are the same as here.

\begin{figure}[t]
\begin{center}
\subfigure[Coverage probability]{\scalebox{0.22}{\includegraphics{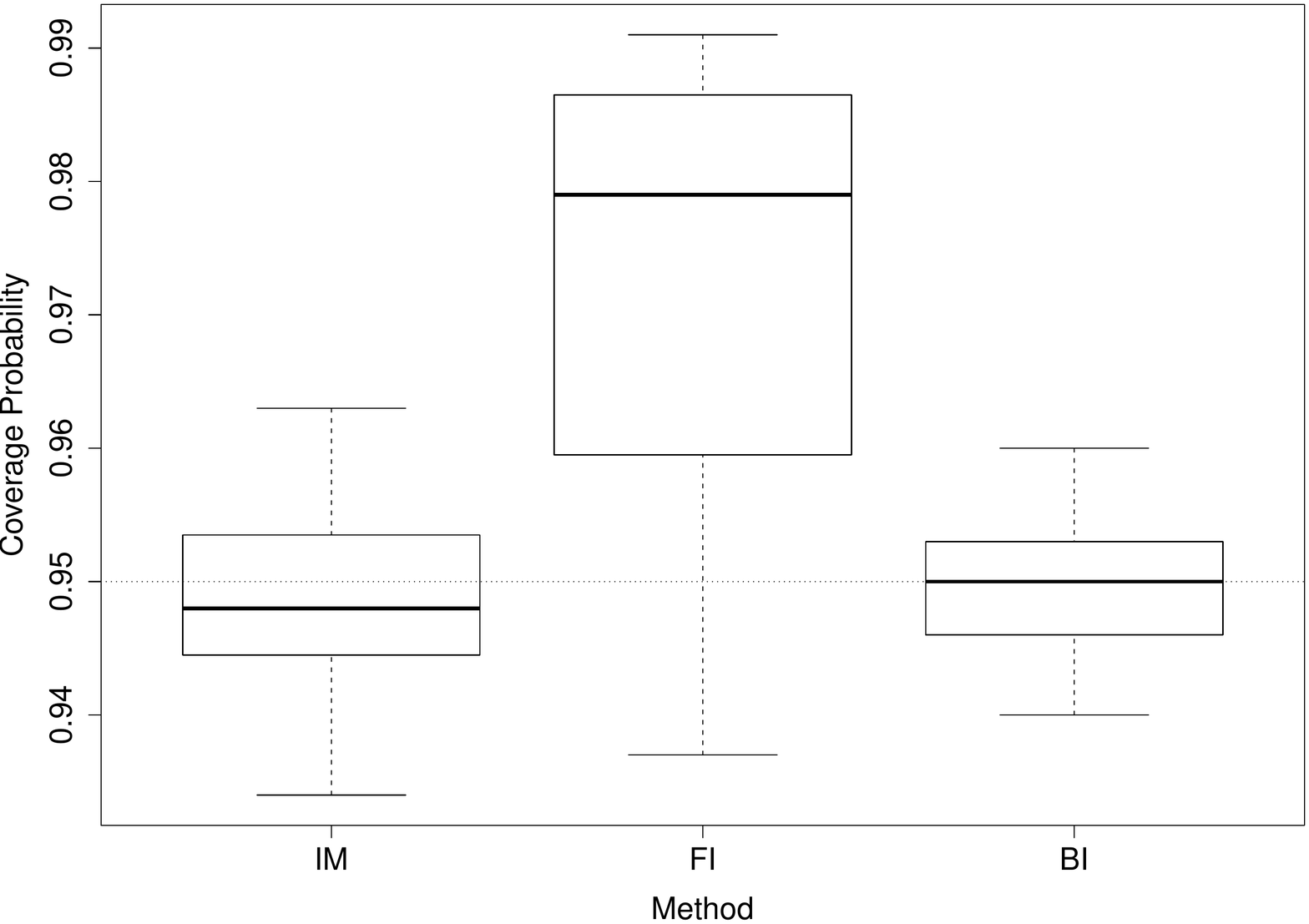}}} 
\subfigure[Relative length difference]{\scalebox{0.22}{\includegraphics{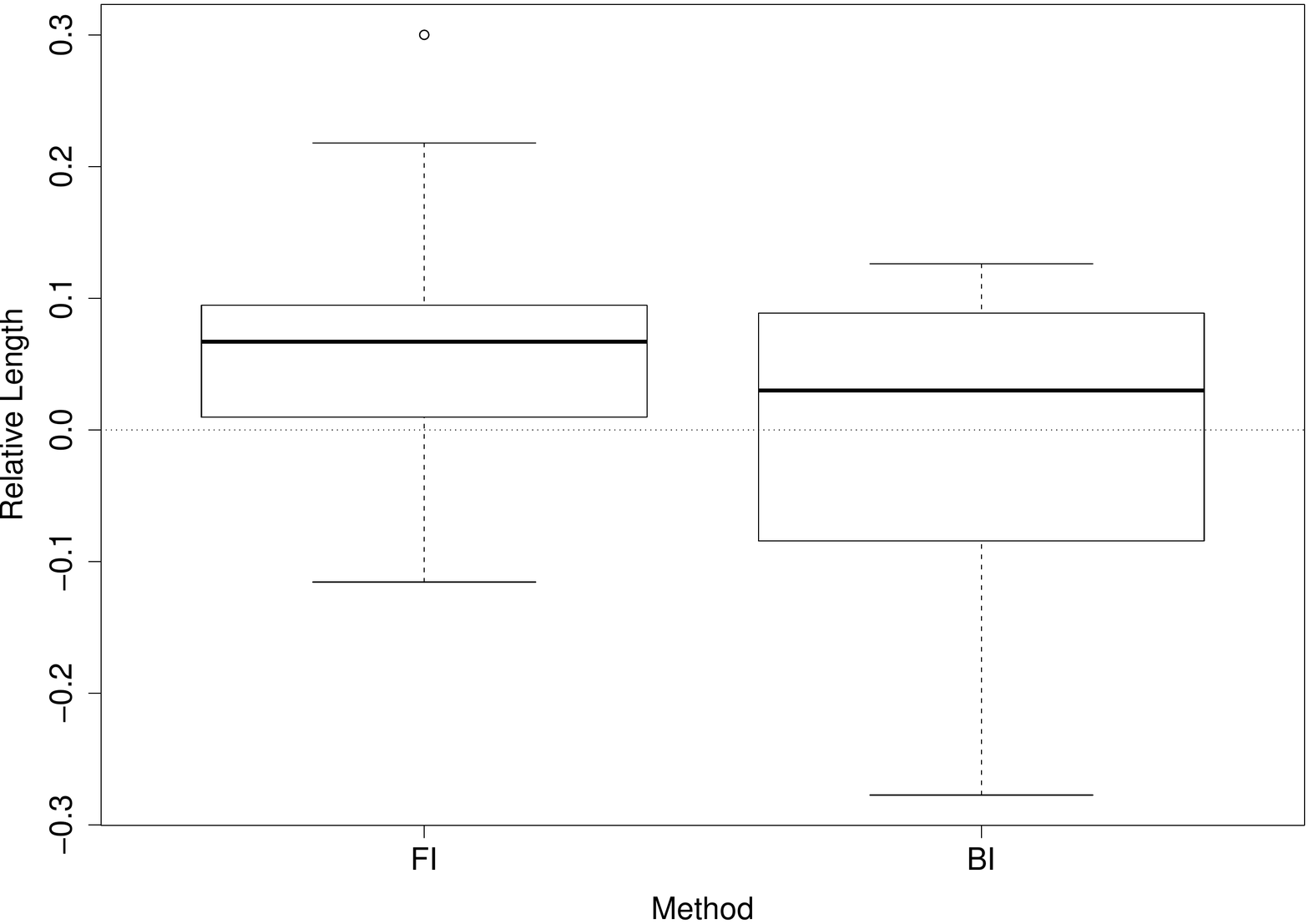}}}
\end{center}
\caption{Simulation results from Section~\ref{SS:simulation}.  BI corresponds to the exact method of \citet{burch.iyer.1997}, and FI corresponds to the fiducial method of \citet{e.hannig.iyer.2008}.}
\label{fig:sim1}
\end{figure}

\subsection{Real-data analysis}
\label{SS:real}

\begin{example}
\label{ex:assay}
Equal numbers of subjects are tested under each standard and test preparations and a blank dose under a ($2K+1$)-point symmetrical slope-ratio assay.  The response, on logarithmic scale, is assumed to depend linearly on the dose level.  A modified balanced incomplete block design with $2K'+1$ $(K'<K)$ block size is introduced by \citet{das.kulkarni.1966}.  The $i$th dose levels for standard and test preparations are represented by $s_i$ and $t_i$, $i=1,\ldots,K$.  Under this design, the dose will be equally spaced and listed in ascending order.  A balanced incomplete block design with $K$ doses of the standard preparation inside $K'$ blocks is constructed and used as the basic design.  Then a modified design is constructed by adding a blank dose and $K'$ doses of the test preparation into every block, under the rule that dose $t_i$ should accompany $s_i$ in every blocks.  The model developed by Das and Kulkarni can be written as 
\[ y_{ijm}=\mu+\beta_jx_{ij}+\alpha_m+ \eps_{ijm},\quad i \in \{s,t,c\}, \quad j=1,\ldots,k,\quad m =1,\ldots,b \]
where $y_{sjm}$, $y_{tjm}$, $y_{cjm}$ represent observation response in $m$th block for $j$th dose of standard preparation, test preparation and blank dose; $x_{sj}$ and $x_{tj}$ represent $j$th dose level for standard and test preparation; $x_{cj}$ is zero by default, $\alpha_m$ denotes $m$th block effect; and $\eps_{ijm}$ denotes independent random errors with common distribution $\nm(0,\evar)$.  We consider random block effects and assume that $\alpha_m$ are independent with common distribution $\nm(0,\avar)$.  Independence of $\alpha_m$ and $\eps_{ijm}$ is also assumed.

We analyze data coming from a nine-point slope-ratio assay on riboflavin content of yeast, with two replications in each dose; see Table~2 in \citet{e.hannig.iyer.2008} for the design and data.  For this design, we have $L=3$ distinct eigenvalues, namely, 4.55, 1, and 0, with multiplicities 1, 1, and 10, respectively.  A plot of the plausibility function for $\rho$ is shown in Figure~\ref{fig:pl}(a).  The function exceeds 0.2 at $\rho=0$, which implies that 90\% and 95\% plausibility intervals include zero.  The left panel of Table~\ref{table:real} shows the 90\% and 95\% interval estimates for $\rho$ based on the Burch--Iyer, fiducial, and IM methods.  In this case, the IM intervals are provably exact and shorter.  
\end{example}

\begin{figure}[t]
\begin{center}
\subfigure[Example~\ref{ex:assay}]{\scalebox{0.4}{\includegraphics{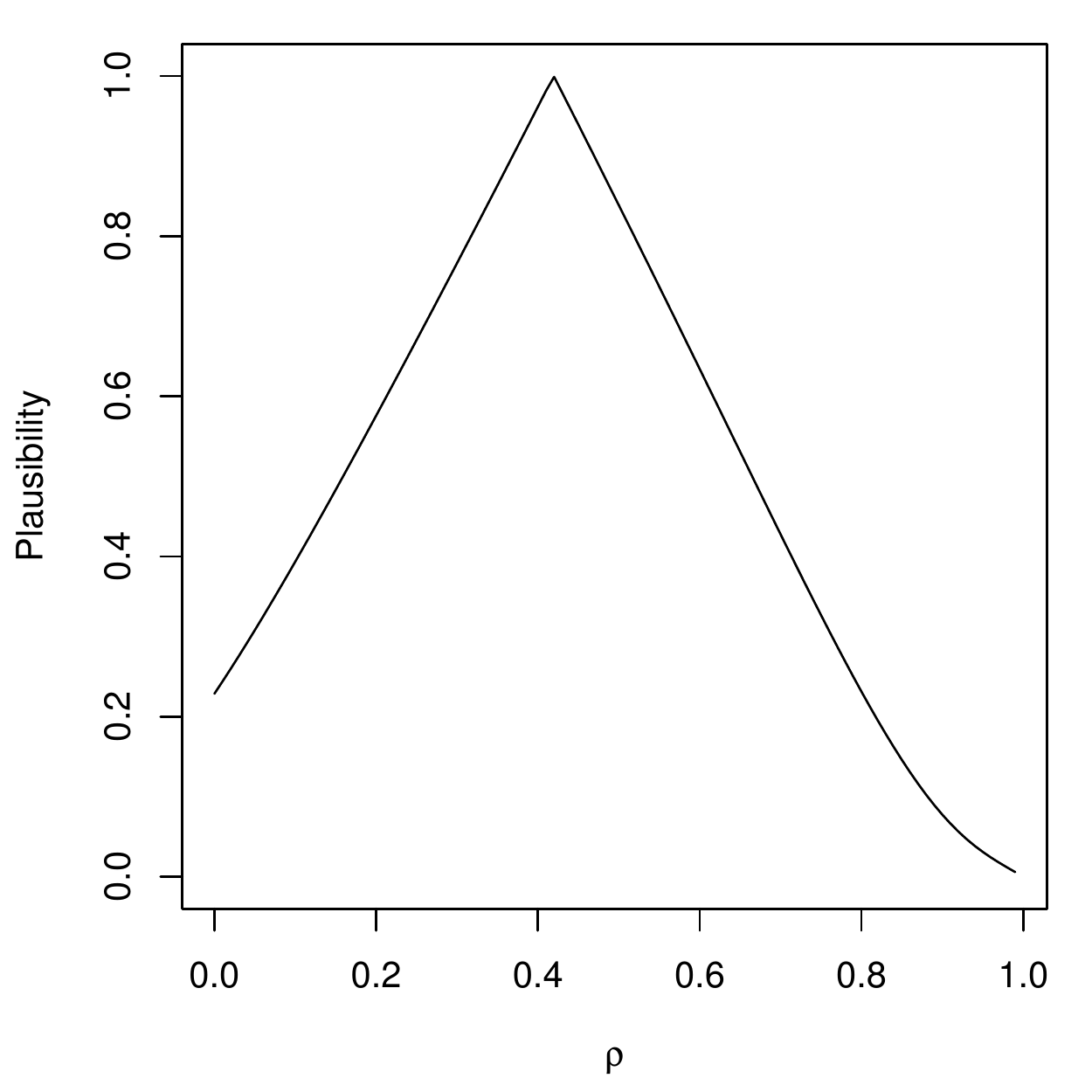}}}
\subfigure[Example~\ref{ex:lamb}]{\scalebox{0.4}{\includegraphics{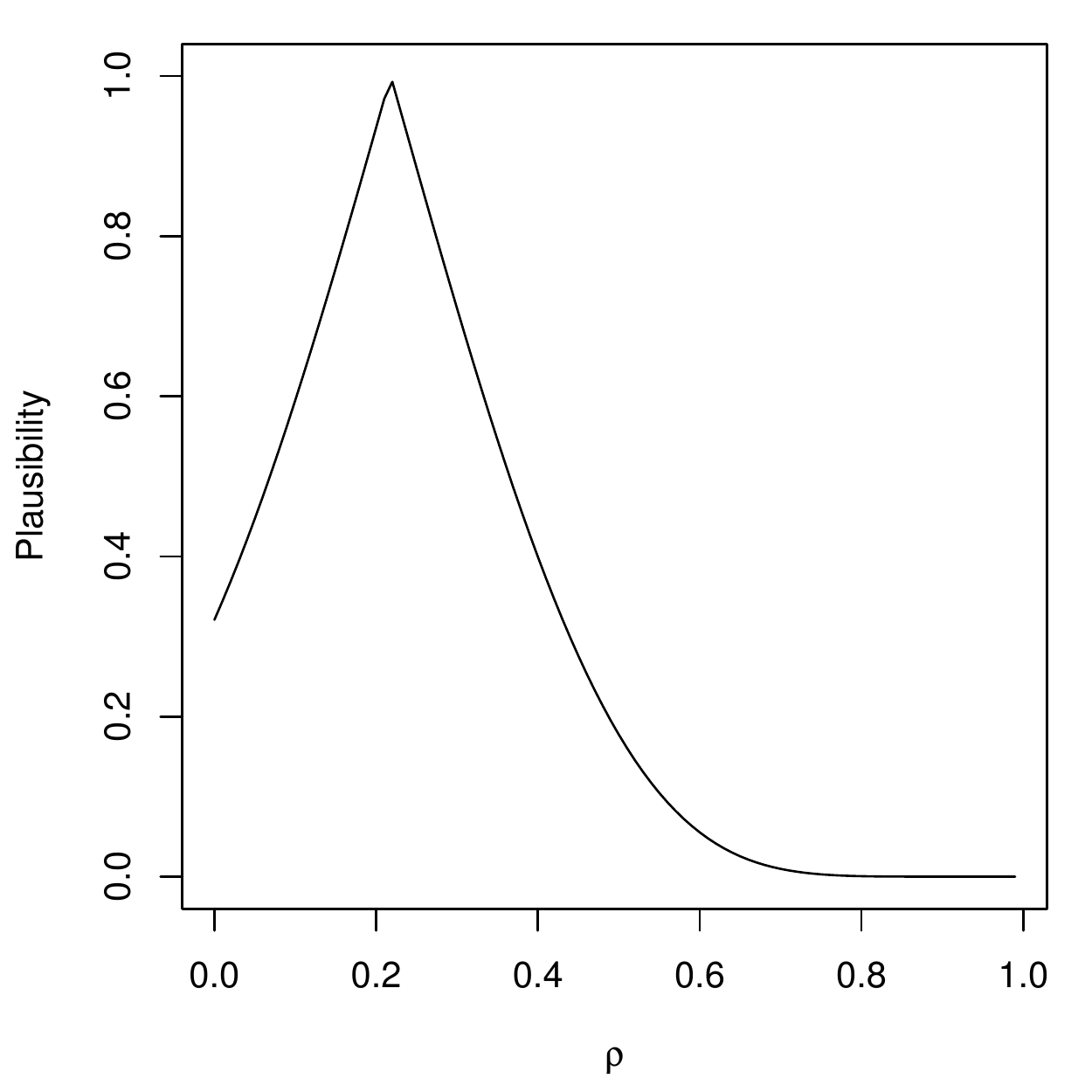}}}
\end{center}
\caption{Plausibility functions for $\rho$ in the two examples.}
\label{fig:pl}
\end{figure}

\begin{table}
\begin{center}
\begin{tabular}{ccccccc}
\hline
& & \multicolumn{2}{c}{Example~\ref{ex:assay}} & & \multicolumn{2}{c}{Example~\ref{ex:lamb}} \\
Method & & 90\% & 95\% & & 90\% & 95\% \\
\hline 
Burch--Iyer & & 0.913 & 0.956 & & 0.567 & 0.615  \\
Fiducial & & 0.916 & 0.957 & & 0.466 & 0.530 \\
IM & & 0.881 & 0.924 & & 0.554 & 0.597 \\ 
\hline
\end{tabular}
\caption{Upper bounds on the interval estimates for $\rho$ (lower bounds are all zero) based on the three methods in the two real-data examples.}
\label{table:real}
\end{center}
\end{table}

\ifthenelse{1=1}{}{

# Results for new IM with numerical integration:

> o <- assay.example(alpha=0.05, plot.pl=FALSE)
[1] 0.0000000 0.9264546
> o <- assay.example(alpha=0.1, plot.pl=FALSE)
[1] 0.0000000 0.8816783
> o <- lamb.example(alpha=0.05, plot.pl=FALSE)
[1] 0.0000000 0.6071329
> o <- lamb.example(alpha=0.1, plot.pl=FALSE)
[1] 0.0000000 0.5539082

# Results for REML (using observed information for variance):

> o <- assay.example(alpha=0.05, plot.pl=FALSE)
[1] 0.0000000 0.9264546
> o <- assay.example(alpha=0.1, plot.pl=FALSE)
[1] 0.0000000 0.8816783
> o <- lamb.example(alpha=0.05, plot.pl=FALSE)
[1] -0.1336466  0.5226253
> o <- lamb.example(alpha=0.1, plot.pl=FALSE)
[1] -0.0808910  0.4698697

}

\begin{example}
\label{ex:lamb}
\citet{harville.fenech.1985} analyzed data on birth weights of lambs.  These data consist of the weights information at the birth of 62 single-birth male lambs, and were collected from three selection lines and two control lines.  Each lamb was the offspring of one of the 23 rams and each lamb had a distinct dam.  Age of the dam was also recorded and separated into three categories, numbered 1 (1--2 years), 2 (2--3 years), and 3 (over 3 years).   
A linear mixed model for these data is
\[ y_{ijkl}=\mu+\beta_i+\pi_j+\alpha_{jk}+\eps_{ijkl}, \]
where $y_{ijkl}$ represents the weight of the $l$th offspring of the $k$th sire in the $j$th population lines and of a dam in the $i$th age category; $\beta_i$ represents $i$th level age effect; $\pi_j$ represents the $j$th line effects; $\alpha_{jk}$ denotes random sire effects and are assumed to be independently distributed as $\nm(0,\avar)$; and random errors denoted by $\eps_{ijkl}$ is supposed to be independently distributed as $\nm(0,\evar)$.  Furthermore, the $\alpha_{jk}$s and $\eps_{ijkl}$s are assumed to be independent.  In this case, $L=18$, $\lambda_1=5.09$, $\lambda_L=0$ and $r_L=37$; all non-zero eigenvalues have multiplicity 1 except $\lambda_8=2$ with multiplicity 2.  A plot of the plausibility function for $\rho$ is shown in Figure~\ref{fig:pl}(b).  As in the previous example, the plausibility function is positive at $\rho=0$, which means that plausibility intervals with any reasonable level will contain $\rho=0$.  We also used each of the three methods considered above to compute 90\% and 95\% interval estimates for $\rho$.  The results are shown in the right panel of Table~\ref{table:real}.  In this case, IM gives a shorter interval compared to Burch--Iyer.  The fiducial interval, however, is shorter than both exact intervals.  We expect the IM interval to be most efficient, so we explore the relative performance a bit further by simulating 1000 independent data sets from the fitted model in this case, i.e., with $\hat\sigma_\alpha^2=0.767$ and $\hat\sigma_\eps^2=2.763$ as the true values.  In these simulations, the fiducial and IM coverage probabilities were 0.944 and 0.954, respectively, both within an acceptable range of the nominal level, but the average lengths of the intervals are 0.488 and 0.456.  That is, the IM intervals tend to be shorter than the fiducial intervals in problems similar to this example, as we would expect.
\end{example}

\ifthenelse{1=1}{}{
Some results from "lamb weight data"

1) results of the real data
       90
BI   0.567         0.615
FI   0.466         0.530
IM  0.554         0.597
LI   0.505         0.573

2) results of simulations based on "lamb weight data"

       Coverage Probability          Length of C.I       (95
FI        94.4                                 0.488
IM       95.4                                  0.456
LI        97.5                                  0.584

}

\section{Concluding remarks}
\label{S:discuss}

The IM method proposed here gives exact confidence intervals for the heritability coefficient $\rho$, as well as the variance ratio $\psi$, and numerical results suggest increased efficiency compared to existing methods.  A question is if these same techniques can be employed for exact prior-free probabilistic inference on other quantities related to the variance components $(\avar, \evar)$.  It is well-known that, for the unbalanced design case, exact marginalization is challenging.  In the IM context, this means that the association is not ``regular'' in the sense of \citet{immarg}.  Therefore, some more sophisticated tools are needed for exact inference on, say, the individual variance components $\avar$ and $\evar$.  This application provides clear motivation for our ongoing investigations into more general IM-based marginalization strategies.  

Another important question is if the techniques presented herein can be applied in more complex and high-dimensional mixed-effect models.  In genome-wide association studies, for example, the dimensions of the problem are extremely large.  We expect that, conceptually, the techniques described here will carry over to the more complex scenario.  However, there will be computational challenges to overcome, as with all approaches \citep{kang.natgen.2010, zhou.stephens.2014}.  This, along with the incorporation of optimal predictive random sets \citep[e.g.][]{imbasics} is a focus of ongoing research.  


\section*{Acknowledgements}

The authors thank Chuanhai Liu for helpful comments and suggestions.  This work is partially supported by the U.S.~National Science Foundation, grant DMS--1208833.

\bibliographystyle{apalike}
\bibliography{/Users/rgmartin/Dropbox/Research/mybib}

\end{document}